\documentclass{scrartcl}
\usepackage{amsfonts}
\usepackage{amsmath}
\usepackage{amssymb}
\usepackage{amsthm}
\usepackage{mathtools}
\usepackage{enumerate}
\usepackage{dsfont}
\usepackage{graphicx}
\usepackage{color}
\usepackage[all]{xy}
\usepackage{hyperref}
\usepackage{tikz}
\usepackage{tikz-cd}
\usepackage{tkz-graph}
\usepackage{xcolor}
\usepackage{MnSymbol}
\usepackage[normalem]{ulem}

\newtheorem{theorem}{Theorem}
\newtheorem{proposition}{Proposition}

\newtheorem{definition}{Definition}
\newtheorem*{question*}{Question}

\newcommand{\R}{\mathbb R}

\newcommand\mc[1]{\mathcal{#1}}
\newcommand{\cB}{\mc{B}}

\newcommand{\cE}{\mc{E}}
\newcommand{\Em}{\cE_\mathrm{max}}

\newcommand{\cK}{\mc{K}}
\newcommand{\cM}{\mc{M}}
\newcommand{\cO}{\mc{O}}
\newcommand{\cP}{\mc{P}}
\newcommand{\cS}{\mc{S}}

\newcommand{\cX}{\mc{X}}

\newcommand{\cA}{\mc{A}}
\newcommand{\cH}{\mc{H}}
\newcommand{\cN}{\mc{N}}
\newcommand{\cNsa}{\mc{N}_\mathrm{sa}}
\newcommand{\cNSsa}{(\mc{N}_S)_\mathrm{sa}}

\newcommand{\BH}{\mc{B}(\cH)}
\newcommand{\BHsa}{\BH_\mathrm{sa}}

\newcommand{\PH}{{\mc{P}(\cH)}}
\newcommand{\PHone}{{\mc{P}_1(\cH)}}

\newcommand\id{{\mathrm{id}}}

\newcommand\ra{\rightarrow}

\newcommand\lra{\longrightarrow}

\begin{document}
\title{The parliament of observers}
\subtitle{Against the observable axiom of quantum mechanics}
\author{Markus Frembs\thanks{\href{mailto:markus.frembs@itp.uni-hannover.de}{markus.frembs@itp.uni-hannover.de}}, Andreas D\"oring\thanks{\href{mailto:info@andreas-doering.net}{info@andreas-doering.net}}\\
\footnotesize $^*$\textit{Institut f\"ur Theoretische Physik, Leibniz Universit\"at Hannover,}\\[-.2cm]
\footnotesize \textit{Appelstra\ss e 2, 30167 Hannover, Germany}\\
\footnotesize $^\dagger$\textit{Independent Scholar, Frankfurt am Main, Germany}}
\date{}
\maketitle
\vspace{-1cm}

\begin{abstract}
    The `unreasonable effectiveness of mathematics in the natural sciences' may at times give the impression that natural science, especially modern theoretical physics, is all but a formal apparatus. However, a key challenge of physics remains to align everyday notions of experience -- be it directly or mediated through instruments of arbitrary sophistication -- with their mathematical formalisation. This applies in particular to quantum theory, arguably the most successful theory of physics ever invented. One of its core assumptions postulates:
    \vspace{.2cm}
    \begin{center}
        \emph{Observables correspond with self-adjoint operators.}
    \end{center}
    \vspace{.2cm}
    Here, we argue that already this most basic premise about the quantum formalism falls apart under careful scrutiny, at least when understood for all observables at once. In other words, we challenge the role of the noncommutative algebra of bounded operators as an adequate formalisation of the `algebra of observables', understood as (equivalence classes of) measurement procedures that admit a coherent empirical interpretation.
\end{abstract}

\tableofcontents

\vspace{.75cm}
\begin{center}
\textit{die physikalischen Größen $a$ eines gegebenen Systems $S$ entsprechen eindeutig und umkehrbar den (hermitesch-)symmetrischen Operatoren des Hilbertschen Raumes}
\end{center}
\vspace{-.25cm}
\hfill --- J. von Neumann \cite{vonNeumann1927c}
\vspace{.25cm}

\begin{figure}[h!]
    \centering
    \includegraphics[scale=0.22]{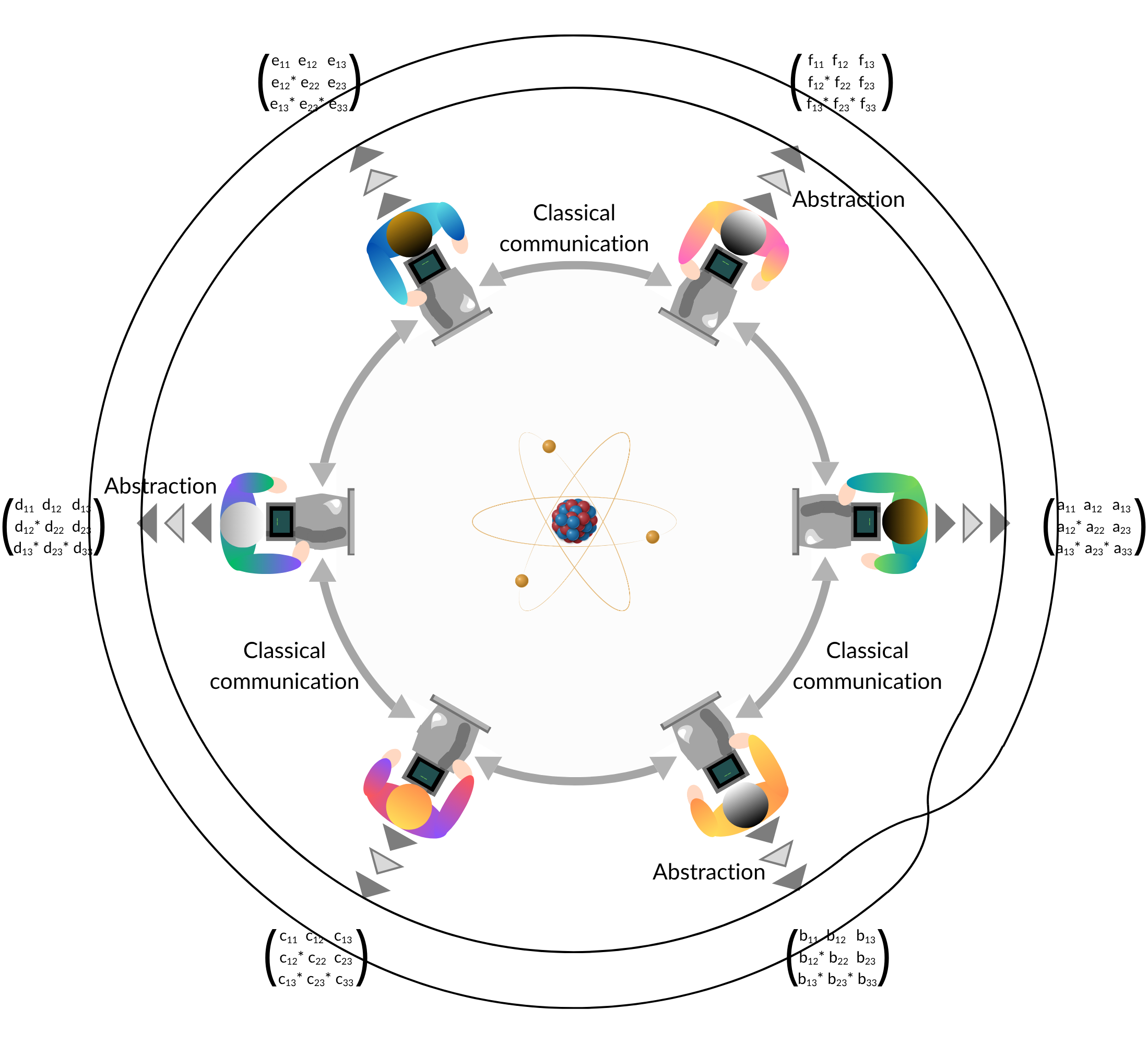}
    \caption{\footnotesize\textbf{Deconstructing the quantum-mechanical correspondence rule $(\dagger)$, `observables $\stackrel{1:1}{\longleftrightarrow}$ self-adjoint operators'.} Observers perform measurements and communicate classically among each other about their respective experimental contexts and outcomes (inner ring). As we argue in Sec.~\ref{Sec_Corr}, the usual abstraction from this data to self-adjoint operators as mathematical representatives of observables (outer ring) becomes problematic in general.}
    \label{fig: parliament}
\end{figure}

\section{The observable axiom of quantum mechanics}\label{sec: observable axiom}

After a turbulent twenty years of discoveries, mostly in atomic experiments, and a number of partially successful heuristic models, the novel quantum phenomena were found to admit a more compelling and much more broadly applicable mathematical description in terms of basic manipulations of matrices and linear operators, by Heisenberg \cite{Heisenberg1925}, Born and Jordan \cite{BornJordan1925,BornHeisenbergJordan1926}.\footnote{Schr\"odinger's parallel wave mechanics \cite{Schrodinger1926I,Schrodinger1926II,Schrodinger1926III,Schrodinger1926IV}, based on rays (`wavefunctions') in a Hilbert space describing the state of a quantum-mechanical system, was famously shown to be equivalent to Heisenberg's matrix mechanics by Schr\"odinger \cite{Schrodinger1926}, Dirac \cite{Dirac1927,Dirac1928}, Stone \cite{Stone1930} and von Neumann \cite{vonNeumann1931}.} The theory was put into its still prevalent symbolic language by Dirac \cite{Dirac1930}, and rigorously embedded within the modern Hilbert space formalism by von Neumann \cite{vonNeumann1932} (see also Ref.~\cite{Weyl1928}). This remarkable foundational work provides the authoritative basis of our understanding of the theory to this day.\footnote{With the exception of the theory of operator algebras, also pioneered by von Neumann \cite{vonNeumann1930,MurrayVonNeumann1936I,MurrayVonNeumann1937II,vonNeumann1940III,MurrayVonNeumann1943IV,vonNeumann1936}, it covers all aspects of quantum theory. Here, we will focus on the finite-dimensional case.}

Yet, a physical theory is not just mathematical formalism; the latter acquires meaning only from being (in part) a representation of experimentally accessible quantities. One of the central pillars connecting quantum theory with experimental practice is the formalisation of observables as self-adjoint operators. More precisely, to any \emph{physical system $S$} is ascribed a (complex, separable) Hilbert space $\cH_S$ and the space of bounded, self-adjoint operators $\cB_\mathrm{sa}(\cH_S)$. One of the most primitive assumptions of textbook quantum mechanics then reads as follows \cite{vonNeumann1927b,vonNeumann1927c,vonNeumann1932} (see also Ref.~\cite{Dirac1930}):
\begin{center}\label{*}
    \textbf{($*$) Observables bijectively correspond with self-adjoint operators.}
\end{center}
Of course, this statement needs some qualification.

First, from the perspective of modern measurement theory \cite{BuschLahti1996,BuschLahtiPellonpaaYlinen2016}, self-adjoint operators, equivalently projection-valued measures (PVMs), correspond with `ideal' or `sharp' observables only, a special case of more general `non-ideal' or `unsharp' observables which are identified with more general positive operator-valued measures (POVMs). This generalisation (of $(*)$) will not be relevant for our arguments, and we will therefore restrict to the ideal case corresponding with self-adjoint operators throughout.\footnote{At least formally, every POVM can also be conceptualised as a PVM on a larger system \cite{Naimark1943}.}

Second, one may riposte that $(*)$ lacks the specification of a measurement model. The standard measurement model, introduced by von Neumann \cite{vonNeumann1932}, was later shown to more naturally accommodate POVMs rather than PVMs \cite{BuschLahti1996,BuschLahtiPellonpaaYlinen2016}. In that model, measurement is specified by a quadruple $(\cK,U,Z,\phi_0)$, where $\cK$ is the Hilbert space associated with the measurement apparatus, or probe system, which is prepared in the state $\phi_0$, $U$ is a unitary interaction between probe and target system, and $Z$ is the `pointer observable' on the probe system. While measurements resulting from this model are generically POVMs, the pointer observable itself is usually taken to be a PVM, that is, a self-adjoint operator. Consequently, for the purpose of analysing the observable axiom $(*)$, introducing the interaction between system and probe explicitly (with respect to the standard model of measurement) would merely shift the focus from one system to another, and we may thus as well disregard it.

Third, the term `observable' is often used without precise definition. It is usually (implicitly) understood as referring to (equivalence classes of) `experimental setups'. In order to talk about $(*)$ with more precision, we will therefore need to spell out a minimal definition of an observable. We will do so in Sec.~\ref{sec: observables}, leaving the term as a placeholder for the time being. As a consequence of this formalisation, we will make explicit that the identification $(*)$ is to be understood not as a bare bijection, but one that also preserves post-processing capabilities of experimenters \cite{vonNeumann1927b,vonNeumann1927c}. To highlight this structural aspect, which we will spell out in more detail in Sec.~\ref{sec: breaking QM}, we will call the bijection in $(*)$ \emph{faithful}.

Fourth, with observable understood loosely as above, $(*)$ is generally too strong. There are at least two aspects to this. If the physical system exhibits symmetries or superselection rules, then not all self-adjoint operators on a given Hilbert space are associated with observables. For instance, electric charge is a classical degree of freedom, it labels different `charge sectors' which decompose the total system Hilbert space into orthogonal subspaces to which physical (pure) states are confined. Self-adjoint operators that would measure superpositions of states in different charge sectors are therefore excluded from the quantum formalism. Moreover, it is usually understood that not every self-adjoint operator corresponds with a `measurement apparatus'. For instance, the Bell operator \cite{Bell1964,CHSH1969} $B=a\otimes b+a\otimes b'+a'\otimes b-a'\otimes b'$ is a sum of tensor products of self-adjoint operators on two (spatially separated) systems $a,a'\in\cB_\mathrm{sa}(\cH_{S_A})$ and $b,b'\in\cB_\mathrm{sa}(\cH_{S_B})$ with $\cH_S=\cH_{S_A}\otimes\cH_{S_B}$, that are individually local (with respect to the tensor product decomposition), and can thus be measured by local observers; however, $B$ is best understood as describing the statistical ensemble of many runs of measuring these local operations, rather than being implementable as a single-shot measurement itself.

Our discussion will not depend on these subtleties either. The issue we point out below already arises for local systems, and we may thus restrict to systems $S$ that are local to an experimenter. We let $\cN_S\subset\cB(\cH_S)$ be the von Neumann algebra of $S$, whose self-adjoint elements we denote by $\cNsa:=\cNSsa=\{a\in\cN_S\mid a^*=a\}$.\footnote{More generally, the \emph{quasilocal algebra} in algebraic quantum field theory is the inductive limit of the net of local algebras $\cA(O)$, where $O\subset\cM$ ranges over local regions of some (spacetime) manifold $\cM$, and which are meant to represent the observables `accessible to local observers in $O$' \cite{HaagKastler1964,Haag1996,FewsterRejzner2020}.} (From now on, we will mostly suppress the label specifying the (local) system under consideration.)

Then ($*$) assumes the following refinement (cf. Ref.~\cite{vonNeumann1927b,vonNeumann1927c,vonNeumann1932,Dirac1930}):
\begin{center}
    \textbf{($\dagger$) Sharp observables bijectively and faithfully correspond with the self-adjoint operators in $\cNsa$.}
\end{center}
We will refer to ($\dagger$) as the \emph{observable axiom} or \emph{observable-operator correspondence} of quantum theory. It forms a crucial part of the link between physical concepts and their mathematical representation (cf. Fig.~\ref{fig: parliament}).  With ($\dagger$), (the complex algebra $\cN$ generated by) $\cNsa$ is often called the \emph{algebra of observables of a quantum system $S$}.\footnote{The product on $\cN$ is given by composition of linear operators, which reduces to matrix multiplication in finite dimensions. Self-adjoint operators are not closed under this product, and thus generate a complex algebra. Nevertheless, $\cNsa$ is closed under, and thus forms a real algebra with respect to the symmetrised (Jordan) product (see also Sec.~\ref{sec: consequences}). This distinction will be irrelevant for our arguments, and we therefore loosely refer to $\cNsa$ as the algebra of observables.}

In this paper, we challenge $(\dagger)$ and the status of $\cNsa$ as the `observable algebra' of a quantum system. More specifically, under mild assumptions we derive a basic consistency property for a collection of observables, which we show to be incompatible with the relation $(\dagger)$, that is, with the structure of self-adjoint operators, $\cNsa$. For simplicity and conceptual clarity, we will restrict to systems with finitely many degrees of freedom.

\section{On the observable-operator correspondence}\label{Sec_Corr}

Having reviewed the observable axiom ($\dagger$) of quantum mechanics in Sec.~\ref{sec: observable axiom}, we proceed to scrutinise it in the present one. To this end, we first provide a more formal definition for the term `observable'. This requires a minimum of mathematical structure, which, as we argue, is common to any, and thus independent of a particular physical theory and its mathematical representation. In particular, it applies to classical, quantum, as well as various proposals for post-quantum theories.

\subsection{A minimal formalisation of observables}\label{sec: observables}

To formalise what is meant by an `observable', we seek a minimal, light-weight definition, that does not pre-suppose anything about the embedding of these objects into the full mathematical apparatus of a physical theory. Conceptually, we take `observables' to refer to `real-world measurements',\footnote{As we will specify below, this does not require an observable to be directly `implementable'; it is sufficient that it can be post-processed in an unambiguous way from physically implementable ones.}\footnote{This immediately raises the question: a measurement of what? While it is interesting to ponder the possibility that it may not always be possible to sharply demarcate the boundary between system and observer, here we adopt the common assumption underlying all present physical theories -- including quantum theory -- that physical systems can be treated in isolation, and be given an independent mathematical description, including its observable structure. With this assumption in place, we will leave implicit the subsystem index $S$ on mathematical structures ascribed to a system.} and our formalisation is meant to extract the essential mathematical structure necessary for a description at this level.\footnote{Our formalisation of the term `observable' here is not new, but has various incarnations within logical and operational frameworks to quantum theory \cite{BirkhoffVonNeumann1936,Ludwig1953,Ludwig1964,Jauch1968,Mackey1963,Maczynski1971b,DvurecenskijKukova2014,Muller2021,FoulisRandall1972,RandallFoulis1973,PtakPulmannova1991} \label{fn: observable}}

Crucially, we take the following stance (to be implicitly understood):
\begin{equation*}\label{ax: EA informal}
    \begin{minipage}{0.9\textwidth}
    \centering
    \itshape
    an experimenter can unambiguously refer to an observable
    
    (prior to the act of measurement, and independent of a given state);
    
    she can record her descriptions of observables (e.g. in a laboratory notebook) and
    communicate them coherently to herself as well as to other experimenters.
    \end{minipage}
    \tag{B}
\end{equation*}
Spelling out what this means in more detail is the content of this and the next section.

In particular, we take for granted the possibility to distinguish the outcomes of an observable, equivalently that it is possible to label (the outcomes of) observables by elements in some alphabet, which, as is customary, we will identify with the real line $\R$.

Moreover, if an observable is to refer to a `real-world measurement', it must admit a description in terms of experimentally accessible data. We will refer to such a description as an \emph{experimental context}. We thus arrive at a first definition.

\begin{definition}\label{def: observable}
    An observable $X$ associated with a physical system $S$ is described by its experimental context $E_X$ and the set of its possible outcomes (or events) $\Omega_X$.\footnote{In general, the outcome set of an observable may be discrete, continuous or a mixture of these. For simplicity, here we focus on the finite case.}
\end{definition}

To be sure, there are many details specifying a given experimental context, many of which may be irrelevant for the mathematical description of an observable. We will discuss the subtleties involved in reducing such redundancy in the next section. Before doing so, we motivate the general formal structure of experimental contexts.

Handed an observable, an experimenter (or observer)\footnote{We will use these terms interchangeably.} will readily start collecting data and performing statistics on it. In fact, being granted the ability to apply statistical reasoning immediately enlarges the set of observables associated with an experimenter, to all those obtained from basic algebraic manipulations and post-processing of measurement outcomes. Taking this into account, it is natural to model the experimental context $E$ (with respect to a system $S$) as a measurable space of outcomes (or events)$ (\Omega_E,\Sigma_E)$,\footnote{For a finite experimental context whose elementary outcomes are individually distinguishable, we take $\Sigma_E=2^{\Omega_E}$, equivalently the events $\{\omega\}$ correspond with the atoms of $\Sigma_E$.} and identify observables with bounded, measurable functions on it,
\begin{align*}
    \cA_E:=
    \cM_b(\Omega_E,\Sigma_E)
    =\{X:\Omega_E\ra\R\mid X\mathrm{\ bounded\ and\ }\Sigma_E\mathrm{-measurable}\}\; .
\end{align*}
$\cA_E$ is closed under pointwise addition, multiplication, scalar multiplication and contains the constant function of value $1$ which acts as a unit, hence, $\cA_E$ defines a unital commutative algebra. For $\Omega_E$ finite, this simplifies to $\cA_E=\cM_b(\Omega_E,\Sigma_E)=\R^{\Omega_E}$.\footnote{In addition, the supremum $||X||_\infty:=\sup_{\omega\in\Omega_E}|X(\omega)|$ defines a norm and promotes $\cA_E$ to a (unital, commutative) Banach algebra. For $\Omega_E$ finite and $\mu_E$ the counting (or any other equivalent) measure, $\cA_E$ coincides with the self-adjoint part of the von Neumann algebra $L^\infty(\Omega_E,\Sigma_E,\mu_E)$.}

\begin{definition}\label{def: experimental context}
    An experimental context $E$ of a physical system $S$ is a measurable space $(\Omega_E,\Sigma_E)$, together with the algebra of bounded, measurable functions $\cA_E=\cM_b(\Omega_E,\Sigma_E)$.
\end{definition}

Note that given the ability to post-process, the specific choice of labelling the outcome set $\Omega_E$ of a set of observables becomes `gauge'. Our main point of contention with $(\dagger)$ will be that for observables with an unambiguous reference to real-world measurements this choice of `gauge' must be consistent across experimental contexts; yet, as we will show, this is not the case for the set of self-adjoint operators $\cNsa$ associated with a quantum-mechanical system. To make this precise, we need to talk not only about a single observable, but about collections of observables and the relations between them.

Before doing so, we first observe that any bounded, measurable function $X:\Omega_X\ra\R$ defines an experimental context on its own:
\begin{align}\label{eq: X-generated AvN}
    \cA_X
    :=\cM_b(\Omega_X,\sigma(X))
    =\{g\circ X\mid g\in\cM_b(\R,\cB(\R))\}\; ,
\end{align}
where $\sigma(X)=X^{-1}(\cB(\R))=\{X^{-1}(B)\mid B\in\cB(\R)\}$ is the countably (here, finitely generated) $\sigma$-algebra associated with $X$ and $\cB(\R)$ is the Borel $\sigma$-algebra on $\R$. In the finite setting we are considering here, one simply has $\cA_X=\R^{\Omega_X}$. Since by Eq.~(\ref{eq: X-generated AvN}) every observable generates its own experimental context, we may think of a space of observables, $\cO$, equivalently as a collection of experimental contexts, $\cE$, represented by algebras $\cA_E$ for every $E\in\cE$. Our coherence condition below will be expressed from this perspective.

\subsection{Spaces of observables}\label{sec: spaces of observables}

In this subsection, we combine various observables and their experimental contexts into a single structure, which for distinction with the term `algebra of observables', which is already reserved for $\cNsa$, we will refer to simply as a \emph{space of observables $\cO$}.

Clearly, the description of a physical system will generally involve many observables and experimental contexts (cf. Def.~\ref{def: experimental context}). It is part of the mathematical structure of a specific physical theory to combine them into an economical whole, however, a priori, different experimental contexts are completely independent from one another, and thus comprise a `space of observables', which with Eq.~(\ref{eq: X-generated AvN}) is of the form\footnote{A similar construction is discussed in Ref.~\cite{KochenSpecker1967} (and elsewhere). We will address the similarities and differences with the perspective taken in Ref.~\cite{KochenSpecker1967} (and elsewhere) in Sec.~\ref{sec: KS theorem revisited} below.}
\begin{align}\label{eq: space of observables}
    \cO_0
    =\bigcupdot_{E\in\cE_0}\cA_E\; ..
\end{align}
This description of a system may be highly redundant, as there may be nontrivial relations among the different experimental contexts of observables that may simplify their formalisation. Clearly, the observables within a given experimental context in Def.~\ref{def: experimental context} are related via post-processing (mathematically, the Borel functional calculus). Yet, non-trivial relations may exist also between the observables of different experimental contexts, which may allow for a compressed description of $\cO_0$ in terms of a reduced space of observables $\cO$. To do so, we need to address the nature of such relations, and to what extent they can be incorporated in our description of $\cO$, while staying faithful to the premise that $\cO$ is to be understood in terms of classically communicable records.

At this point, it is useful to discuss an example. A natural equivalence relation on the space of observables and their experimental contexts is that of \emph{operational indistinguishability}. Assume we are given preparation devices, whose collective output constitutes the \emph{state space $\cS$} (of a system $S$), which generally takes the form of some convex set \cite{Barrett2007,Muller2021,Plavala2023}. Then given two observables $X,X'\in\cO_0$ with respective outcome spaces $\Omega_X,\Omega_{X'}$, we say that two outcomes (events) are \emph{operationally indistinguishable} if
\begin{align}\label{eq: operational indistinguishability}
    P(\omega\in\Omega_X\mid X;s)
    =P(\omega'\in\Omega_{X'}\mid X';s)\quad\quad\forall s\in\cS\; ,
\end{align}
where $P(\omega\in\Omega_X\mid X;s)$ denotes the probability to observe outcome $\omega$ upon performing the measurement $X$ in the state $s$. Eq.~(\ref{eq: operational indistinguishability}) defines an equivalence relation on the space of experimental contexts; given two operationally indistinguishable experimental contexts it is natural to identify them, that is, to consider the quotient $\cE_\sim=\cE_0/\sim$, equivalently the quotient $\cO_\sim=\cO_0/\sim$, where $\sim$ is the equivalence relation defined by Eq.~(\ref{eq: operational indistinguishability}).\footnote{This procedure is standard in operational approaches such as generalised probabilistic theories (GPTs) \cite{Barrett2007,Muller2021,Plavala2023}, which model a physical system by a pair $(\cS_\sim,\cO_\sim)$ with $\cS_\sim$ the convex sets of states (equivalence classes of operationally distinguishable preparation procedures).}

However, while this construction is sound mathematically, we must exercise some caution in descending to this quotient too quickly. The reason is the following: whatever relations we impose on the primitive description $\cO_0$ (e.g. Eq.~(\ref{eq: operational indistinguishability})), for the resulting space of observables $\cO$ to remain meaningful as a description of real-world measurements, it must still be possible for an experimenter (or a family of experimenters) to keep a coherent description of their respective experimental contexts as well as the relations between them. That is, an experimenter must be able to unambiguously record and communicate about the observables $X\in\cO$ and their respective outcome sets $\Omega_X$. As stated above, we take this to be a necessary, albeit implicit assumption when speaking of an `algebra of observables'. We spell this out more formally.

Let $E\in\cE$ denote an experimental context with $d$ distinguishable elementary outcomes $\Omega_E$. To say that $\cO$ admits an unambiguous description means, in particular, that both the experimental context $E$ and its outcome set $\Omega_E$ can be unambiguously described within a shared common language. Fixing an abstract $d$-element reference alphabet\footnote{The choice of alphabet $\Lambda$ is arbitrary, in the case of finite outcome spaces $\Omega_E$, it is sufficient to label its elements by the natural numbers $\{1,\cdots,|\Omega_E|\}$.}
\begin{align*}
    \Lambda_d
    =\{\lambda_1,\ldots,\lambda_d\}\; ,
\end{align*}
we may express this condition by choosing, for every $E\in\cE$, a (measurable) bijection\footnote{In the finite setting, $c_E$ is simply a bijection. More generally, it may be taken to be an isomorphism of measurable spaces; at the event-algebra level, such a map induces an isomorphism of their underlying Boolean $\sigma$-algebras of events. Conversely, for standard Borel spaces, every such Boolean $\sigma$-algebra isomorphism is induced by an isomorphism of measurable spaces.}
\begin{align*}
    c_E:\Omega_E\lra\Lambda_d\; .
\end{align*}
The particular choice of maps $c_E$ carries no physical significance, as it merely amounts to a naming convention for the outcomes in the respective experimental contexts; what matters is merely that such a description exists. As discussed above, a given physical theory may further identify observables, equivalently their underlying event spaces. In this case, the name assigned to an observable and its underlying events should also be independent of the context in which it occurs, that is, whenever $\omega\in\Omega_E$ and $\omega'\in\Omega_{E'}$ are identified within $\cE$, then a naming convention should also satisfy $c_E(\omega)=c_{E'}(\omega')$.

Using this, we encode what it means to say that a family of experimental contexts admits a coherent description.

\begin{definition}\label{def: classical coordinatisation}
    Let $\cO$ be the space of observables of a physical system, let $\cE$ be the associated set of experimental contexts and let $(\Lambda,\Sigma_\Lambda)$ be a reference measurable space. A family $c=(c_E)_{E\in\cE}$ of (measurable) injections $c_E:(\Omega_E,\Sigma_E)\ra(\Lambda,\Sigma_\Lambda)$ such that 
    \begin{align}\label{eq: event identifications}
        c_E|_{\Omega_E\cap\Omega_{E'}}
        =c_{E'}|_{\Omega_E\cap\Omega_{E'}}
    \end{align}
    for all $E,E'\in\cE$ is called a \emph{(classical) coordinatisation of $\cO$}.
\end{definition}

A classical coordinatisation collects the various naming conventions of observables across experimental contexts. In particular, note that Eq.~(\ref{eq: event identifications}) expresses that naming conventions of observables that are identified by some physical theory must coincide.

\begin{definition}\label{def: classical recordability}
    A space of observables $\cO$ is called \emph{(consistently) recordable} if it admits a classical coordinatisation.
\end{definition}

Def.~\ref{def: classical recordability} formalises our premise \ref{ax: EA informal}: the space of observables associated with a system is meant to admit a coherent description, that is, to be unambiguously and consistently recordable by an experimenter. Def.~\ref{def: classical recordability} should thus be read as expressing a fundamental consistency or well-posedness condition for the very notion of (a space of) observables.

Now, a classical coordinatisation $c$ induces (bijective) relations between pairs of isomorphic experimental contexts (with the same cardinality of outcomes spaces),\footnote{More generally, $\nabla_{E'E}:(\Omega_E,\Sigma_E)\to(\Omega_{E'},\Sigma_{E'})$ is a measurable isomorphism, that is, a bijection such that both $\nabla_{E'E}$ and $\nabla^{-1}_{E'E}$ are measurable. Note also that a measurable map $\nabla_{E'E}$ induces a unital algebra homomorphism of observable algebras by pullback, $\nabla^*_{E'E}:\cA_{E'}\ra\cA_E$, $\nabla^*_{E'E}(f)=f\circ\nabla_{E'E}$.}
\begin{align*}
    \nabla_{E'E}
    :=c_{E'}^{-1}\circ c_E:\Omega_E\lra\Omega_{E'}\; .
\end{align*}
Again, no distinguished physical meaning is assigned to the maps $\nabla_{E'E}$ themselves, they merely track the relative naming conventions $c_E$ between different experimental contexts $E\in\cE$, e.g. calibration procedures, changes of experimental arrangement, or any other classically communicable relation between the respective experimental contexts. We may thus think of the maps $\nabla_{E'E}$ simply as relations between different experimental contexts recorded in a laboratory notebook.

\begin{definition}\label{def: ex context connection}
    Let $\cO$ be the space of observables of a physical system, and let $\widetilde{\cE}\subset\cE$ be a subset of pairwise isomorphic experimental contexts. A family $\nabla=(\nabla_{E'E})_{E,E'\in\widetilde{\cE}}$ of (measurable) isomorphisms $\nabla_{E'E}:(\Omega_E,\Sigma_E)\ra (\Omega_{E'},\Sigma_{E'})$ with $\nabla_{EE'}=\nabla^{-1}_{E'E}$ and
    \begin{align}\label{eq: overlap identifications}
        \nabla_{E'E}|_{\Omega_E\cap\Omega_{E'}}
        =\mathrm{id}_{\Omega_E\cap\Omega_{E'}}\; 
    \end{align}
    for all $E,E'\in\widetilde{\cE}$ is called an \emph{(experimental) context connection on $\widetilde{\cE}$}.
\end{definition}

Note that Eq.~(\ref{eq: overlap identifications}) simply expresses that naming conventions of events (and their associated observables) that are identified by some physical theory must coincide.\footnote{Conversely, fixing a classical coordinatisation of some reference experimental context and transporting it along an experimental context connection satisfying Eq.~(\ref{eq: flatness}) yields a globally consistent family of classical coordinatisations. In this sense, classical coordinatisations and experimental context connections satisfying Eq.~(\ref{eq: flatness}) both express the same consistency condition.}

Now, with respect to Def.~\ref{def: ex context connection}, we can re-express our informal condition in \ref{ax: EA informal}, that is, the possibility of an experimenter to unambiguously refer to observables as follows.

\begin{proposition}\label{prop: classical coherence}
    Let $\cO$ be a consistently recordable space of observables with corresponding set of experimental contexts $\cE$, and let $\widetilde{\cE}\subset\cE$ be a subset of pairwise isomorphic experimental contexts. Then there exists an experimental context connection $\nabla$ on $\widetilde{\cE}$ such that for any tuple of isomorphic contexts $\gamma=(E_0,\cdots,E_{n-1},E_n=E_0)$ in $\widetilde{\cE}$,
    \begin{align}\label{eq: flatness}
        P_\gamma(\nabla)
        :=\circ_{i=0}^{n-1}\nabla_{E_{i+1}E_i}
        =\mathrm{id}_{\Omega_{E_0}}\; .
    \end{align}
\end{proposition}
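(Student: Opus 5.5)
The plan is to build the connection directly from a classical coordinatisation, exactly as foreshadowed by the definition $\nabla_{E'E}:=c_{E'}^{-1}\circ c_E$ preceding Def.~\ref{def: ex context connection}. Since $\cO$ is consistently recordable (Def.~\ref{def: classical recordability}), fix a classical coordinatisation $c=(c_E)_{E\in\cE}$ as in Def.~\ref{def: classical coordinatisation}. Restrict it to the subfamily $\widetilde{\cE}$ of pairwise isomorphic contexts.

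The first step is to make sure the composite $c_{E'}^{-1}\circ c_E$ is a well-defined isomorphism $\Omega_E\to\Omega_{E'}$. The maps $c_E$ are only injections into $\Lambda$, so we need $c_E(\Omega_E)=c_{E'}(\Omega_{E'})$ for all $E,E'\in\widetilde{\cE}$.

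\begin{itemize}
\item In the finite setting with a $d$-element alphabet, as in the discussion before Def.~\ref{def: classical coordinatisation}, each $c_E$ is a bijection onto $\Lambda_d$, and there is nothing to check.
\item In general, we may replace $\Lambda$ by a common image. One way is to precompose with the given isomorphisms between contexts in $\widetilde{\cE}$ so that all images agree, while keeping Eq.~(\ref{eq: event identifications}) on overlaps.
\end{itemize}

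This step is the main obstacle. The overlap constraint pins the images of shared events, while the freedom in relabelling the remaining events has to be used consistently. I would first handle the case where the images already coincide, namely the $\Lambda_d$ case the paper has in mind, and treat the general measurable case by restricting the codomain to the common image $c_E(\Omega_E)$. Measurability of $\nabla_{E'E}$ and of its inverse follows because $c_E$ is a measurable isomorphism onto its image.

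With this in place, the axioms of Def.~\ref{def: ex context connection} are immediate:

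\begin{itemize}
\item Inverses: $\nabla_{EE'}=c_E^{-1}\circ c_{E'}=(c_{E'}^{-1}\circ c_E)^{-1}=\nabla_{E'E}^{-1}$.
\item Overlaps: for $\omega\in\Omega_E\cap\Omega_{E'}$, Eq.~(\ref{eq: event identifications}) gives $c_E(\omega)=c_{E'}(\omega)$. Hence $\nabla_{E'E}(\omega)=c_{E'}^{-1}(c_{E'}(\omega))=\omega$, which is Eq.~(\ref{eq: overlap identifications}).
\end{itemize}

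Finally, flatness Eq.~(\ref{eq: flatness}) follows by telescoping. For $\gamma=(E_0,\ldots,E_n=E_0)$,
\begin{align*}
P_\gamma(\nabla)=c_{E_n}^{-1}\circ c_{E_{n-1}}\circ c_{E_{n-1}}^{-1}\circ\cdots\circ c_{E_1}\circ c_{E_1}^{-1}\circ c_{E_0}=c_{E_0}^{-1}\circ c_{E_0}=\mathrm{id}_{\Omega_{E_0}}\; .
\end{align*}
This can be proved formally by induction on $n$, showing that the partial composite up to $E_k$ equals $c_{E_k}^{-1}\circ c_{E_0}$.
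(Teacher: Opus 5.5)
Your construction $\nabla_{E'E}:=c_{E'}^{-1}\circ c_E$, your verification of Def.~\ref{def: ex context connection} and your telescoping computation are exactly the paper's proof. You also correctly spot something the paper's proof uses silently: $c_{E'}^{-1}\circ c_E$ is a bijection $\Omega_E\to\Omega_{E'}$ only if $c_E(\Omega_E)=c_{E'}(\Omega_{E'})$. When the $c_E$ are bijections onto a common $\Lambda_d$, your argument is complete. That is your first bullet, and it is the only case needed in Thm.~\ref{thm: no observable-operator correspondence}, where $\widetilde{\cE}=\Em$.

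The gap is your second bullet. Def.~\ref{def: classical coordinatisation} literally allows mere injections into an arbitrary $\Lambda$. In that setting the reduction to a common image is impossible, and the proposition is false.

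\begin{itemize}
\item Take $\Lambda=\bigcup_{E}\Omega_E$ and let $c_E$ be the inclusions. These satisfy Eq.~(\ref{eq: event identifications}) trivially, so every space of observables would be recordable.
\item In the quantum case this is $\Lambda=\PHone$ with $c_E=q_E$, sending each outcome to its rank-one projection.
\item The proposition would then give an experimental context connection on $\Em$ satisfying Eq.~(\ref{eq: flatness}). For $d\geq 3$ this contradicts the Kochen--Specker theorem on which Thm.~\ref{thm: no observable-operator correspondence} rests.
\end{itemize}

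Your proposed repair is circular. Precomposing with isomorphisms $\phi_E:\Omega_E\to\Omega_{E_0}$ does make all images equal to $c_{E_0}(\Omega_{E_0})$. However, it preserves Eq.~(\ref{eq: event identifications}) only if $\phi_E(\omega)=\phi_{E'}(\omega)$ for all $\omega\in\Omega_E\cap\Omega_{E'}$. A coherent, overlap-fixing family of this kind is precisely a flat context connection, which is what you are trying to prove. `Pairwise isomorphic' only supplies unrelated abstract bijections. Likewise, `restricting the codomain to the common image' presupposes that a common image exists.

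So this case cannot be argued, only excluded. Drop the bullet and state $c_E(\Omega_E)=c_{E'}(\Omega_{E'})$ for all $E,E'\in\widetilde{\cE}$ as part of what recordability means; without it, recordability is vacuous. A smaller point: measurability of $c_{E'}^{-1}$ on its image is also not given by Def.~\ref{def: classical coordinatisation}. It is automatic for finite outcome sets.
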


\begin{proof}
    Since $\cO$ is consistently recordable, there exists a classical coordinatisation $c$. Let $\nabla_{E'E}:=c^{-1}_{E'}\circ c_E$ for all $E,E'\in\widetilde{\cE}$. Clearly, we have $\nabla_{EE'}=\nabla^{-1}_{E'E}$ and Eq.~(\ref{eq: overlap identifications}) follows from Eq.~(\ref{eq: event identifications}) since $\nabla_{E'E}(\omega)=c^{-1}_{E'}\circ c_E(\omega)=\omega$ for any event $\omega\in\Omega_E,\Omega_{E'}$. Hence, $\nabla=(\nabla_{E'E})_{E,E'\in\widetilde{\cE}}$ defines an experimental context connection by Def.~\ref{def: ex context connection}. Moreover, for every tuple of experimental contexts $\gamma=(E_0,E_1,\ldots,E_{n-1},E_n=E_0)$ in $\widetilde{\cE}$, we have
    \begin{equation*}
        P_\gamma(\nabla)
        =\circ_{i=0}^{n-1}\nabla_{E_{i+1}E_i}
        =(c_{E_0}^{-1}c_{E_{n-1}})\cdots(c_{E_2}^{-1}c_{E_1})(c_{E_1}^{-1}c_{E_0})
        =c_{E_0}^{-1}c_{E_0}
        =\id_{\Omega_{E_0}}\; .\qedhere
    \end{equation*}
\end{proof}

Eq.~(\ref{eq: flatness}) thus formalises the condition that an experimenter can unambiguously keep record of the elements across experimental contexts. Indeed, Eq.~(\ref{eq: flatness}) assures that every observable can be uniquely identified by its experimental context and a reference label. In this sense, Prop.~\ref{prop: classical coherence} formalises the informal coherence condition in \ref{ax: EA informal}, which we regard as an essential prerequisite for the formulation of any physical theory.

It is easy to see that Prop.~\ref{prop: classical coherence} is satisfied on the level of $\cO_0$ and $\cE_0$ in Eq.~(\ref{eq: space of observables}), that is, prior to any more refined formalisation within some explicity physical theory. In particular, note that up to now we have not yet used the observable-operator correspondence $(\dagger)$ of quantum theory. We will do so in the next section, and find that $(\dagger)$ is incompatible with the notion of consistent recordability in Def.~\ref{def: classical recordability}.

\subsection{Breaking of the observable-operator correspondence}\label{sec: breaking QM}

We are now in the position to evaluate $(\dagger)$. It bijectively relates observables, that is, bounded, measurable functions $X:\Omega_X\ra\R$, with self-adjoint operators $a_X\in\cNsa$. However, as noted in Sec.~\ref{sec: observable axiom}, $(\dagger)$ is to be understood not only as a bare bijection: it should also preserve the post-processing capabilities of experimenters. More formally, $(\dagger)$ is required (see Ref.~\cite{vonNeumann1927b,vonNeumann1927c,vonNeumann1932}) to preserve functional composition, that is, if $X\stackrel{\dagger}{\ra}a_X$, then, for every admissible measurable relabelling $f$ of (the outcomes of) $X$,\footnote{Eq.~(\ref{eq: FUNC}) is usually called the \emph{functional composition principle} in discussions of the Kochen-Specker theorem (see also Sec.~\ref{sec: KS theorem revisited}).}
\begin{align}\label{eq: FUNC}
    f\circ X
    \stackrel{\dagger}{\lra}
    f(a_X)\; .
\end{align}
The reason is that $f\circ X$ need not be taken as an independent measurement procedure, but as the measurement of $X$ followed by classical post-processing of its outcomes. In turn, a representation of observables which failed to preserve this relation would be a bijection of sets, but not a faithful representation of their operational structure. We therefore take Eq.~(\ref{eq: FUNC}) to be implicitly understood as part of the observable axiom $(\dagger)$.

It follows that experimental context data is no longer retained under $(\dagger)$. More precisely, comparing with Sec.~\ref{sec: observables} and noting that every self-adjoint operator generates a commutative subalgebra $C_a=(\{1,a\}'')_\mathrm{sa}$ in $\cNsa$ (via the Borel functional calculus \cite{vonNeumann1930}),\footnote{Recall that the commutant of $\cX\subset\BH$ is defined by $\cX'=\{a\in\BH\mid[a,b]=0\ \forall b\in\cX\}$. The double commutant $C_a=\{1,a\}''$ of an operator $a\in\BHsa$ and the unit element $1$ in $\BH$ defines a commutative von Neumann algebra. In finite dimensions, the algebra $C_a$ consists of all real-linear combinations of the projections onto the eigenspaces of the self-adjoint operator $a$. Conversely, every commutative von Neumann algebra $L^\infty(\Omega,\Sigma,\mu)$ on a standard Borel space is generated by a single measurable function \cite{vonNeumann1930}.} $(\dagger)$ identifies experimental contexts $E\in\cE$ with commutative subalgebras $C_E\subset\cNsa$.

As a consequence, $(\dagger)$ identifies observables in experimental contexts corresponding to the same commutative subalgebra:
\begin{equation}\label{eq: identified observable subalgerbas}
\begin{aligned}
    \cA_E\ \supset\ \cA_{\widetilde{E}}
    \ \stackrel{\dagger}{=}\ 
    C_{\widetilde{E}}=C_{\widetilde{E'}}
    \ \stackrel{\dagger}{=}\ 
    \cA_{\widetilde{E'}}\ \subset\ \cA_{E'}\; .
\end{aligned}
\end{equation}
Crucially, the identifications under Eq.~(\ref{eq: identified observable subalgerbas}) impose additional constraints on (experimental) context connections. It is therefore not clear whether $(\dagger)$ is compatible with Def.~\ref{def: classical recordability} and Prop.~\ref{prop: classical coherence}. Indeed, our main result shows that this is not the case.

\begin{theorem}\label{thm: no observable-operator correspondence}
    The quantum-mechanical observable-operator correspondence \emph{$(\dagger)$},
    \begin{align*}
        \textit{observables}\ \stackrel{1:1}{\longleftrightarrow}\ \textit{self-adjoint operators}\; ,
    \end{align*}
    is incompatible with the notion that quantum observables are consistently recordable.
\end{theorem}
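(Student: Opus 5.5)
We argue by contradiction. Assume $(\dagger)$, including the functional composition principle Eq.~(\ref{eq: FUNC}), and assume the resulting space of observables $\cO$ is consistently recordable. Prop.~\ref{prop: classical coherence} then supplies an experimental context connection $\nabla$ on any family $\widetilde{\cE}$ of pairwise isomorphic contexts. This connection is flat, meaning $P_\gamma(\nabla)=\id$ for every closed tuple $\gamma$. We will exhibit, already for $\cN=\cB(\cH)$ with $\cH=\mathbb{C}^3$, a finite family of isomorphic contexts and a closed loop on which every connection compatible with the identifications Eq.~(\ref{eq: identified observable subalgerbas}) has nontrivial holonomy. This contradicts Eq.~(\ref{eq: flatness}).

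First we translate the constraints. Under $(\dagger)$, each experimental context $E$ with $d$ outcomes corresponds to a maximal commutative subalgebra $C_E\subset\cNsa$, that is, to an orthonormal basis (a PVM) $\{p_1,\dots,p_d\}$. Its outcome set $\Omega_E$ is identified with its atoms, the minimal projections. When two contexts share a coarse-grained subalgebra $C_{\widetilde E}=C_{\widetilde{E'}}$, Eq.~(\ref{eq: identified observable subalgerbas}) identifies the corresponding events. Concretely, for $d=3$, two bases sharing a projection $p$ have $\Omega_E\cap\Omega_{E'}\ni p$. Eq.~(\ref{eq: overlap identifications}) then forces $\nabla_{E'E}(p)=p$. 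The remaining two atoms can be sent to each other in exactly two ways. So on the relevant family, a connection amounts to a choice, for each overlapping pair, of a permutation fixing the shared atoms.

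Next, we choose the family. We take a Kochen--Specker-type configuration of orthonormal bases in $\mathbb{C}^3$, for instance the Peres 33-vector set or the 18-vector set in $\mathbb{C}^4$. We then connect bases that share rays. Consider any classical coordinatisation $c$ with target $\Lambda_3$. Because $c_E$ must agree on shared atoms, $c$ amounts to a global labelling $\lambda$ of all rays in the configuration by $\{\lambda_1,\lambda_2,\lambda_3\}$, bijective on every basis. Fix one label, say $\lambda_1$. Its preimage selects exactly one ray from each basis, and never two orthogonal rays. That is precisely a Kochen--Specker valuation, i.e. a $\{0,1\}$-colouring. So consistent recordability would imply a noncontextual value assignment on the configuration, which the Kochen--Specker theorem forbids for $\dim\cH\geq 3$. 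This gives the contradiction. Equivalently, in the language of Prop.~\ref{prop: classical coherence}, one can trace the constraints around a cycle of bases and exhibit a loop $\gamma$ with $P_\gamma(\nabla)\neq\id$ for every admissible $\nabla$.

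The main obstacle is making the translation step precise. We must argue that $(\dagger)$ with Eq.~(\ref{eq: FUNC}) genuinely forces identification of atoms, not merely of coarse-grained subalgebras, across contexts. The point is that a shared atom $p$ generates the common subalgebra $C_{\{1,p\}}$, whose two events $p$ and $1-p$ are identified. Injectivity of each $c_E$ then pins down the label of $p$ in both contexts. We must also check that the chosen Kochen--Specker configuration yields an obstruction at the level of bijective labellings. This is in fact weaker than what is needed to forbid $\{0,1\}$-colourings, so the standard theorem suffices. If a fully explicit loop is desired, we would instead try to read off an odd cycle of parity constraints from the configuration's orthogonality graph.
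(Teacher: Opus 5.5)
Your argument is correct, but it reaches the contradiction by a more direct route than the paper.

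\textbf{The paper's route.} It goes through the connection formalism. From a coordinatisation it obtains a flat experimental context connection via Prop.~\ref{prop: classical coherence}. It conjugates this by the bijections $q_E:\Omega_E\to\cP_1(C_E)$ to get a connection $\widetilde{\nabla}$ on maximal commutative subalgebras that fixes shared rank-$1$ projections, and checks that flatness is inherited. It then cites Thm.~1 of Ref.~\cite{Frembs2025} (no flat context connections for $d\geq 3$) as a black box.

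\textbf{Your route.} You bypass connections altogether. The overlap condition says exactly that $c$ descends through the $(\dagger)$-identification of atoms to a labelling $\overline{c}$ of rays that is bijective on every orthonormal basis. This is the factorisation $c=\overline{c}\circ q_\dagger$ of Sec.~\ref{sec: whats wrong}. A single fibre $\overline{c}^{-1}(\lambda_1)$ is then already a Kochen--Specker $\{0,1\}$-colouring. This is more elementary and self-contained, and needs only the textbook colouring form of the Kochen--Specker theorem. It also shows the refuted hypothesis is stronger than a colouring, since a labelling partitions the rays into $d$ colourings.

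\textbf{What the paper's route buys.} It gives the holonomy picture: an explicit cycle $\gamma$ with $P_\gamma(\nabla)=\pi\neq\id$. This underwrites its later reading that one observable is represented by two operators $\sum_i a_ip_i\neq\sum_i a_ip_{\pi(i)}$.

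\textbf{Small corrections.}
\begin{enumerate}
\item Your opening promise of one loop on which \emph{every} admissible connection has nontrivial holonomy is stronger than what you or the paper establish. The paper's quantifiers are: for every $\nabla$ there is some $\gamma$. Your actual argument never uses it, so drop it.
\item The clause ``never two orthogonal rays'' needs orthogonal rays of the configuration to lie in a common context. This holds because $(\dagger)$ is onto, so every orthonormal basis is a maximal context and any orthogonal pair can be completed to one.
\item The 18-vector set requires the alphabet $\Lambda_4$, not $\Lambda_3$.
\item Insisting on bijections onto a common $d$-letter alphabet is essential. Def.~\ref{def: classical coordinatisation} literally allows arbitrary injections into some $\Lambda$, and then $c_E:=q_E$ into $\PHone$ would trivially be a coordinatisation. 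The paper's Prop.~\ref{prop: classical coherence} tacitly uses your reading when it treats $c_{E'}^{-1}\circ c_E$ as a bijection $\Omega_E\to\Omega_{E'}$.
\end{enumerate}
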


\begin{proof}
    Since $(\dagger)$ is supposed to apply to all quantum systems, it is sufficient to prove inconsistency with recordability in Def.~\ref{def: classical recordability} in the finite-dimensional case.
    
    Let $\cNsa=\BHsa$ with $\dim(\cH)=d$ the finite dimension of the quantum system under consideration, and let $\Em$ denote the family of maximal experimental contexts corresponding to this system, that is, $|\Omega_E|=d$ for every $E\in\Em$.

    We now impose the observable--operator correspondence $(\dagger)$. For every experimental context $E\in\cE$ and every (sharp) observable $X\in\cA_E$, let $a_X\in\cNsa$ denote its representative self-adjoint operator under $(\dagger)$, and for every experimental context $E\in\cE$ with algebra $\cA_E$ denote by $C_E\subset\cNsa$ its representative commutative subalgebra under $(\dagger)$.
    
    Moreover, for every event $A\in\Sigma_E$, let $\chi_A\in\cA_E=\cM_b(\Omega_E,\Sigma_E)$ denote its indicator function. Compatibility of $(\dagger)$ with the functional calculus induces a map
    \begin{align*}
        \iota_E:\Sigma_E\longrightarrow\cP(C_E)\; ,\qquad
        \iota_E(A):=(\dagger)(\chi_A)\; .
    \end{align*}
    where $\cP(C_E)$ denotes the projections in the commutative algebra $C_E$. Indeed, since $\chi_A^2=\chi_A$, its representative under $(\dagger)$ is a self-adjoint idempotent, hence, a projection. Now, for every $A=X^{-1}(B)$ with $X\in\cA_E$ and $B\in\cB(\R)$, we have
    \begin{align*}
        \iota_E(A)
        =(\dagger)(\chi_A)
        =(\dagger)(\chi_B\circ X)
        =\chi_B((\dagger)(X))
        =\chi_B(a_X)
        =\pi_{a_X}(B)\; ,
    \end{align*}
    where we used that $\chi_A=\chi_B\circ X$, that $(\dagger)$ is compatible with post-processing by Eq.~(\ref{eq: FUNC}), and wrote $\pi_{a_X}:\cB(\R)\ra\PH$ for the unique projection-valued measure associated with $a_X$ by the spectral theorem. In the finite case, this yields a map
    \begin{align*}
        q_E:\Omega_E\lra\cP_1(C_E)\; ,\qquad
        q_E(\omega):=\iota_E(\{\omega\})\; ,
    \end{align*}
    which is a bijection for all maximal experimental contexts $E\in\Em$.
    
    Let $\nabla$ be an experimental context connection on $\Em$. Since $(\dagger)$ identifies (maximal) experimental contexts with (maximal) commutative subalgebras of $\cNsa$, it follows from Eq.~\ref{eq: overlap identifications} that the maps $\widetilde{\nabla}_{C_{E'}C_E}=q_{E'}\circ\nabla_{E'E}\circ q^{-1}_E:\cP_1(C_E)\ra\cP_1(C_{E'})$ satisfy the condition
    \begin{align}\label{eq: overlap condition}
        \left.
        \widetilde{\nabla}_{C_{E'}C_E}
        \right|_{\cP_1(C_E)\cap\cP_1(C_{E'})}
        =\id_{\cP_1(C_E)\cap\cP_1(C_{E'})}\; ,
    \end{align}
    where $\cP_1(C_E)$ denotes rank-$1$ projectors in $C_E$. Hence, $\widetilde{\nabla}=(\widetilde{\nabla})_{C_E,C_{E'}}$ defines a context connection on the commutative subalgebras of $\cNsa$, as defined in Def.~8 of Ref.~\cite{Frembs2025}.
    
    By Def.~\ref{def: classical recordability}, recordability of $\cO$ means there exists a classical coordinatisation $c$, and thus an experimental context connection $\nabla$ that satisfies Eq.~(\ref{eq: flatness}) by Prop.~\ref{prop: classical coherence}. Moreover, this condition is inherited by $\widetilde{\nabla}$ since
    \begin{align*}
        P_\gamma(\widetilde{\nabla})
        =\circ_{i=0}^{n-1}\widetilde{\nabla}_{C_{E_{i+1}}C_{E_i}}
        =\circ_{i=0}^{n-1}\left(q_{E_{i+1}}\circ\nabla_{E_{i+1}E_i}\circ q^{-1}_{E_i}\right)
        =q_{E_0}P_\gamma(\nabla)q^{-1}_{E_0}
        =q_{E_n}q^{-1}_{E_0}
        =\id_{\cP_1(C_0)}\; ,
    \end{align*}
    where we used that context cycles on $\Em$ correspond with context cycles on maximal commutative subalgebras of $\cNsa$, since $E_i\stackrel{(\dagger)}{=}C_{E_i}$, together with $E_n\stackrel{(\dagger)}{=}C_E=C_0\stackrel{(\dagger)}{=}E_n$ and Eq.~(\ref{eq: flatness}). However, by the (reformulation of the) Kochen-Specker theorem \cite{KochenSpecker1967}, Thm.~1 in Ref.~\cite{Frembs2025} (see also Ref.~\cite{Frembs2024}), no such context connections exist for $d\geq 3$.
\end{proof}

Consequently, $\cNsa$ is not a consistently recordable space of observables, as understood by Def.~\ref{def: classical recordability}. And since, as we argued, the latter is intrinsic to the interpretation of $\cNsa$ as an algebra of observables, Thm.~\ref{thm: no observable-operator correspondence} implies that:
\begin{center}
    $(\dagger)$ \textit{is inadequate as a representation of the observable structure of quantum theory.}
\end{center}

In fact, Thm.~\ref{thm: no observable-operator correspondence} asserts that if we `measured' sufficiently many observables\footnote{Note that the problem arises already for finite sets of self-adjoint operators \cite{KochenSpecker1967,Frembs2024,Frembs2025}.} and assume them to be faithfully represented by self-adjoint operators, we could not be certain anymore which operators represent which observables. Indeed, under $(\dagger)$ (and for $d\geq 3$) there \emph{always} exists a tuple of experimental contexts $\gamma=(E_0,...,E_n)$ with $E_0\stackrel{\dagger}{=}C\stackrel{\dagger}{=}E_n$, such that the rank-$1$ projections $(p_1,...,p_d)$ in $C$ are identified with a nontrivial permutation $(p_{\pi(1)},...,p_{\pi(d)})$ with $\pi:=P_\gamma(\nabla)$. But this means that $(\dagger)$ identifies the same nondegenerate observable $X\in E_0\stackrel{\dagger}{=}C$ with different self-adjoint operators,
\begin{align*}
    X\stackrel{\dagger}{\leftrightarrow}a=\sum a_i p_i
    \neq
    \sum a_ip_{\pi(i)}=a'\stackrel{\dagger}{\leftrightarrow}X\; .
\end{align*}

Put another way, $(\dagger)$ thus breaks the eigenvector-eigenvalue link, which is usually understood to relate Hilbert space projections with observable outcomes. Nevertheless, we emphasise that it is \emph{not} sufficient to merely allow the value of an observable to be `contextual', that is, to depend on the experimental context: as long as observables are identified with self-adjoint operators via $(\dagger)$, Thm.~\ref{thm: no observable-operator correspondence} still applies. It is the representation of observables itself, that is, the abstraction from physical observables to their mathematical representation (see Fig.~\ref{fig: parliament}) that needs revision!

\section{Interpreting the result}\label{sec: interpretations}

In this section, we go over some immediate questions that may arise from our claim in Thm.~\ref{thm: no observable-operator correspondence}. We do not have definite answers to all (or even any) of them, but hope to clarify our point by addressing them as best we can, and by starting a discussion.

\subsection{What about self-adjoint operators?}

Thm.~\ref{thm: no observable-operator correspondence} asserts that $\cNsa$ is not consistently recordable. The assumption that the observables of a system are faithfully represented by self-adjoint operators is therefore inconsistent. This throws the standard interpretation of $\cNsa$ as the `algebra of observables' in serious doubt (see also Sec.~\ref{sec: consequences}), since there seems to be no coherent way to map from observables (equivalence classes of operationally indistinguishable experimental setups) to self-adjoint operators (mathematical representatives of observables).

It has to be emphasised that our quibble is not with the mathematical formalism of quantum theory itself. Surely, operator algebras exist as abstract, mathematical objects, and their rich structure theory has undeniably provided much insight into the physics of quantum systems. The question we ask is what justifies their status as (identification with) observables, that is, what justifies the correspondence between the experimentally meaningful notion of observables with their mathematical formalisation.

To be clear, nothing in our argument prevents the elements of $\cNsa$ from being labelled as abstract mathematical objects. The obstruction concerns the additional structure inherited from their interpretation as observables via $(\dagger)$. The problem is therefore not that self-adjoint operators cannot be labelled as abstract mathematical objects -- they obviously can -- but that they do not admit a classical coordinatisation, and thus do not admit an interpretation as observables as outlined in Sec.~\ref{sec: observables} and Sec.~\ref{sec: spaces of observables}.

\subsection{What has gone wrong?}\label{sec: whats wrong}

Before imposing $(\dagger)$, contextual labels distinguish observables via their experimental contexts and Prop.~\ref{prop: classical coherence} is thus unproblematic. Under $(\dagger)$, however, observables in different experimental contexts that correspond to the same operator are declared to represent the same observable. Compatibility with this identification requires the induced context connection to fix the corresponding overlaps of commutative subalgebras. It is precisely this additional condition which turns experimental context connections in Def.~\ref{def: ex context connection} into context connections in Refs.~\cite{Frembs2024,Frembs2025}, which are generally incompatible with Prop.~\ref{prop: classical coherence}.

$(\dagger)$ is usually justified from operational indistinguishability in Eq.~(\ref{eq: operational indistinguishability}). But how does one arrive at this quotient? More precisely, in order to identify (outcomes of) observables via Eq.~(\ref{eq: operational indistinguishability}), it must surely be possible to distinguish observables in the first place, and across different observables prior to their mathematical formalisation? 

Indeed, as pointed out before, this is possible for $\cE_0$, that is, there exists a map
\begin{align}\label{eq: classical coordinatisation}
    c:\bigsqcup_{E\in(\cE_0)_\mathrm{max}}\Omega_E\lra\Lambda_d\; ,\qquad
    c|_{\Omega_E}=c_E\; ,
\end{align}
and, by Prop.~\ref{prop: classical coherence}, this implies that the associated context connection $\nabla=(\nabla_{E'E})_{E,E'\in(\cE_0)_\mathrm{max}}$ with $\nabla_{E'E}=c^{-1}_{E'}\circ c_E$ satisfies Eq.~(\ref{eq: flatness}). Now, $(\dagger)$ defines a quotient of the disjoint union of outcome spaces, $q_\dagger:\bigsqcup_{E\in(\cE_0)_\mathrm{max}}\Omega_E\ra\Omega_\dagger:=\left(\bigsqcup_{E\in(\cE_0)_\mathrm{max}}\Omega_E\right)/\sim_\dagger\cong\PHone$, where
\begin{align*}
    (E,\omega)\sim_\dagger(E',\omega')
    \quad\Longleftrightarrow\quad
    q_\dagger|_E(\omega):=q_E(\omega)=q_{E'}(\omega')=q_\dagger|_{E'}(\omega')\; .
\end{align*}
Under the equivalence relation $\sim_\dagger$ the contextual information disappears. Nevertheless, for the resulting representatives to retain their interpretation as observables with a consistent record, the classical coordinatisation in Eq.~(\ref{eq: classical coordinatisation}) must descend through the identifications of $\sim_\dagger$. More formally, there must exist a map
\begin{align*}
    \overline{c}:\Omega_\dagger\lra\Lambda_d
\end{align*}
such that $c$ factorises through $q_\dagger$, that is, such that $c=\overline{c}\circ q_\dagger$, as in the following diagram:
\begin{equation*}
\begin{tikzcd}[column sep=large,row sep=small]
    \bigsqcup\limits_{E\in\cE}\Omega_E
    \arrow[rr,"c"]
    \arrow[dr,"q_\dagger"']
    && \Lambda_d\\
    &
    \Omega_\dagger
    \arrow[ur,dashed,"\overline{c}"']
    &
\end{tikzcd}
\end{equation*}
Yet, according to Thm.~\ref{thm: no observable-operator correspondence} no such map $\overline{c}$ exists for $\Omega_\dagger$. In particular, the quotient in Eq.~(\ref{eq: operational indistinguishability}) cannot be performed without losing the interpretation of observables in \ref{ax: EA informal}.

\subsection{Are we asking too much?}

One may suspect that we built a straw man: after all, how important is it for an experimenter to have `simultaneously' a description of all observables? Isn't this unrealistic?  In fact, this is not spurious, and it is not asking too much if, as is usually assumed (such as by $(\dagger)$), that a system admits a kinematical description in terms of states and observables. Kinematics demands a well-defined step from physical entities and procedures to their mathematical representation. Only once this is settled, one can `do physics'. It is here, in this first step of mathematical formalisation, that we take a closer look.

`Simultaneous' does not mean that all experimenters perform measurements at the same time (which is impossible), but rather that they can unambiguously and consistently communicate with each other about their respective experimental contexts and outcomes. This level of mathematical formalisation is required before the results of measurements can be analysed within a physical theory. Instead of several experimenters, it is sufficient to consider a single experimenter who intends to perform a series of measurements. She, too, will have to settle on a mathematical formalisation first.

Def.~\ref{def: classical coordinatisation} should therefore not be understood as an additional physical assumption about experimental contexts. Rather, it expresses the elementary consistency condition that their descriptions can be recorded and communicated in a common classical language, and on this reading is part of the interpretation attached to $(\dagger)$.

Scrutinising this implicit assumption about the mathematical formalisation of observables as self-adjoint operators, we showed that it is internally inconsistent.

\subsection{Is this just Bohr's doctrine of classical concepts?}\label{sec: on Bohr}

\ref{ax: EA informal} and its formalisation via Def.~\ref{def: classical coordinatisation} and Prop.~\ref{prop: classical coherence} are clearly reminiscent of Bohr's doctrine of classical concepts. Throughout his life, he fervently insisted that \cite[pg. 209]{Bohr1949}
\begin{center}
    \textit{[...] the account of all evidence must be expressed in classical terms. The argument is simply that by the word ‘experiment’ we refer to a situation where we can tell to others what we have done and what we have learned and that, therefore, the account of the experimental arrangement and of the results of the observations must be expressed in unambiguous language with suitable application of the terminology of classical physics.}
\end{center}
We agree with this emphasis on the necessity to be able to communicate experimental contexts. Our premise \ref{ax: EA informal} may thus be seen as a condensed version of Bohr's conceptual demand, and Def.~\ref{def: classical coordinatisation} and Prop.~\ref{prop: classical coherence} as a more concrete formalisation of his idea.

Furthermore, our emphasis on sharply separating the formal apparatus of a physical theory from its reference to experimentally accessible phenomena also seems to be closer in spirit to Bohr than to Heisenberg. Whereas Heisenberg sought to reconcile the felt tension between the quantum formalism and its classical description in terms of uncertainty relations, Bohr saw the latter as a consequence of a deeper metaphysical principle, which he coined `complementarity'. Echoing Bohr's sentiment, we remain skeptical whether the retreat behind the walls of probabilistic uncertainty can be deemed sufficient to reconcile the paradox of the mind-boggling accuracy with which quantum-theoretical predictions have been experimentally verified over the last 100 years on the one side, and the highly unsatisfactory status of its axiomatisation on the other, which includes -- as a central piece -- the observable-operator correspondence $(\dagger)$. In Sec.~\ref{sec: consequences}, we point towards directions that may improve this situation.

Before doing so, we point out another parallel between Bohr's doctrine of classical concepts and our result, namely the shared connection with the topos approach to quantum theory \cite{IshamButterfieldI,IshamButterfieldII,IshamButterfieldIII,IshamButterfieldIV,IshamDoeringI,IshamDoeringII,IshamDoeringIII,IshamDoeringIV,HLSBohrification2009,HLS2009,HLS2010,DoeringFrembs2019a}. The latter decomposes the noncommutative algebra of a quantum system into its commutative subalgebras, which are loosely meant to represent experimental contexts. However, as Thm.~\ref{thm: no observable-operator correspondence} shows, this falls short of implementing Bohr's ideas fully. To say it with Bohr \cite[pg 209]{Bohr1949}: while the commutative subalgebras of a noncommutative algebra qualify as `classical terms', when taken together, they are no longer `unambiguous' in describing the `account of all evidence'.

The reason is that, by Thm.~\ref{thm: no observable-operator correspondence}, which uses the reformulation of the Kochen-Specker theorem in Thm.~1 of Ref.~\cite{Frembs2025}, commutative subalgebras cannot consistently describe the experimental contexts they are supposed to represent. Our result therefore not only reaffirms the intimate connection between Bohr's doctrine of classical concepts and no-go theorems in quantum foundations; but it provides a new, and arguably more fundamental reading of the latter. We explain this in more detail in the next section.

\subsection{The Kochen-Specker theorem revisited}\label{sec: KS theorem revisited}

From a technical point of view, our Thm.~\ref{thm: no observable-operator correspondence}, rests on Thm.~1 in Ref.~\cite{Frembs2025}, which constitutes a refomulation of Kochen-Specker contextuality (see also Ref.~\cite{Frembs2024,Frembs2026_ContextualitynqubitPauliGroup}). Indeed, context connections were first defined for the partially ordered set of commutative subalgebras of $\cNsa$, as a reformulation of the Kochen-Specker theorem \cite{Frembs2024,Frembs2025}. Traditionally, the Kochen-Specker theorem \cite{Specker1960,KochenSpecker1967} is interpreted as asserting the non-existence of microstates that would assign definite values to all quantum observables -- assumed to be represented by self-adjoint operators --, as is the case for points in a classical phase space. Consequently, any naive phase-space picture for quantum mechanics fails. While technically we are using the same result (the violation of Eq.~(\ref{eq: flatness})),\footnote{Indeed, the Kochen-Specker theorem may be read as a mere structural result about the noncommutative algebra $\cNsa$.\label{fn: structural KS}} conceptually, our view is reversed -- from the non-existence of microstates on $\cNsa$ to the status of the observable algebra $\cNsa$ itself. It is worth elaborating on this point in more detail.

Noncontextuality relates to the question whether quantum theory admits a classical description.\footnote{The corresponding research programme is often associated with Einstein, and thought to have been initiated through the claim that quantum theory is an `incomplete' physical theory \cite{EPR1935}.} If the answer was yes, such a description could play a similar role for quantum mechanics as statistical mechanics did for thermodynamics: providing a microscopic explanation for macroscopic thermodynamic phenomena. To address this question, Kochen and Specker define noncontextuality as compatibility with the Borel functional calculus, arguably a minimal condition for a map $\epsilon:\cNsa\ra L^\infty(\Sigma,\sigma,\mu)$ to qualify as a representation of self-adjoint operators. The non-existence of such a map is the homonymous theorem in quantum foundations,\footnote{We remark that the Kochen-Specker theorem and contextuality have been generalised in various ways, including \cite{IshamButterfieldI,IshamDoeringI,Doering2004,Spekkens2005,HLSBohrification2009,AbramskyBrandenburger2011,VanDenBergHeunen2012,FritzEtAl2013,CabelloSeveriniWinter2014}. However, all of these are (predominantly) concerned with the mathematical (representational) aspect of the theorem only and, as a result, our criticism above applies essentially in the same way also to those extended frameworks.} which is usually summarised in the slogan: quantum theory is contextual, or more poignantly \cite{Peres1978}:
\begin{center}
    \textit{Unperformed experiments have no results.}
\end{center}
Yet, Peres, Kochen and Specker, as well as many authors refining the proof since then, take as a given that quantum observables are faithfully represented by self-adjoint operators to begin with. Our reading of the Kochen-Specker theorem via Thm.~\ref{thm: no observable-operator correspondence} is different:
\begin{center}
    \textit{Self-adjoint operators are not consistently recordable.}
\end{center}
In this reading, the Kochen-Specker theorem says that while it is surely possible to represent a limited number of observables by self-adjoint operators, there is no coherent way of doing so for the full algebra $\cNsa$. Again, if we assume that sufficiently many observables are faithfully represented by the self-adjoint operators (such that the projections onto their eigenspaces contain e.g. a Kochen-Specker configuration), we could not be certain anymore which operators represent which observables. Certainly, this throws into question the validity of the formalisation of observables in the first place.

This change of perspective is not merely linguistic. The issue with the classically presumed existence of outcomes to counterfactually performed experiments involves the measurement process, which is plagued with numerous unresolved conceptual issues itself, and brings in the inherently probabilistic character of the Born rule. By contrast, the observable-operator correspondence $(\dagger)$ does not depend on states or statistical predictions, if it is supposed to formalise the kinematical structure of quantum theory.

What is more, the above change of perspective on the Kochen-Specker theorem consolidates two aspects of contextuality that remain contentious in the usual reading.

The first is state-independence. Recall that Peres' paper \cite{Peres1978} actually refers to Bell experiments and not the Kochen-Specker theorem. Indeed, the no-go theorems by Bell, Kochen and Specker are often presented together and in a manner that suggests the latter is a generalisation of the former. However, while violations of Bell inequalities are inherently state-dependent, the Kochen-Specker theorem admits state-independent proofs \cite{KochenSpecker1967,KleinmannEtAl2012,Frembs2024,Frembs2025}. This may seem surprising when Kochen-Specker contextuality is viewed as a property adhering to states (and compared to Bell's theorem), yet it is immediate when read as a property of the `algebra of observables' itself.

The second is determinism. It is sometimes pointed out that the notion of (Kochen-Specker) noncontextuality is overly classical in demanding that operators be assigned outcomes in a deterministic way (insinuating that a probabilistic assignment via the Born rule may resolve the issue).\footnote{It is true that quantum states assign probabilities to projections in a way that is independent of context (commutative subalgebras). However, this does not resolve the issue of relating projections with observable outcomes via $(\dagger)$.} Instead, when read as a statement about the observable-operator correspondence $(\dagger)$, determinism simply becomes part of the idea that an algebra of observables exists as a kinematical, operationally accessible notion.

In summary, we suggest that (Kochen-Specker) contextuality is better interpreted as
\begin{itemize}
    \item an algebraic notion, as opposed to a notion referring to (micro)states, and as
    \item relating to the observable-operator correspondence, rather than hypothetical representations $\epsilon$ of the algebra of self-adjoint operators as measurable functions.
\end{itemize}

To emphasise the point, recall the common reading of the Kochen-Specker no-go theorem: any classical (hidden-variable) model of quantum theory is necessarily contextual in that it does not respect the identifications between experimental contexts. This statement is about (noncontextual) classical representations of the quantum formalism (motivated by the analogy with thermodynamics and statistical mechanics). As a statement about the mathematical apparatus of quantum theory this is fine;$^{\ref{fn: structural KS}}$ however, this reading somewhat hides the physical motivation behind the result in the definition of Kochen-Specker noncontextuality, which is based on the interpretation of self-adjoint operators as observables, and thus on the observable-operator correspondence $(\dagger)$.

With Thm.~\ref{thm: no observable-operator correspondence}, what we propose is to instead put this physical assumption at the center, and to read the obstruction of the Kochen-Specker theorem `backwards', namely as a constraint on the mathematical formalisation of physical observables.

\subsection{Why was this not seen before?}

By asking this question we do not mean to claim absolute novelty of the issue we raise. As discussed in the previous sections, there are close parallels with similar ideas dating back to even the earliest days of quantum theory (see Sec.~\ref{sec: on Bohr}). Nevertheless, we believe that our result adds a new dimension and sharpness to the problem (see Sec.~\ref{sec: KS theorem revisited}), which -- in light of the unparalleled success of quantum mechanics -- begs the question how the observable axiom $(\dagger)$ has survived essentially unchanged over the last 100 years.

We offer a few clues towards answering this latter question. The first is that a violation of Eq.~(\ref{eq: flatness}) is hard to measure or even notice experimentally. In fact, unlike a Bell correlation, it is unclear how to measure contextuality in principle -- precisely because the identification of observables across contexts cannot be established without theoretical preconceptions \cite{Winter2014}. Even with added assumptions, violating a `noncontextuality inequality' \cite{Klyachko2008,CabelloEtAl2008,KirchmairEtAl2009,GuhneEtAl2010} requires many, and precisely chosen experimental contexts, which may not be present in most experimental setups (that are not specifically chosen to this way); moreover, as with the violation of Bell inequalities, these are based on states and thus probabilistic in nature, which further incurs issues surrounding the finite precision of real-world measurements \cite{Meyer1999,Kent1999,BarrettKent2004}.

Yet, we believe the real reason to lie elsewhere. As pointed out before, the issue with $(\dagger)$ comes from prematurely identifying experimental contexts in the formalisation of observables as self-adjoint operators. In fact, in his independent proof\footnote{Bell's argument against noncontextual hidden variable models is based on Gleasons' theorem \cite{Gleason1957}, and although published later, forms the motivation for his more famous 1964 paper \cite{Bell1964}.} that quantum theory is (Kochen-Specker) contextual, Bell refutes the assumption of noncontextuality, insisting that it still presumes too much of the peculiar features of the quantum formalism. In terms of our analysis above, what he criticises is precisely the identification of experimental contexts with commutative subalgebras of $\cNsa$ via $(\dagger)$. He even calls this identification `unreasonable' \cite{Bell1966}, insisting that the actual implementation of the observables corresponding to different experimental contexts factually involves very different experimental setups, even if these observables are identified under the observable-operator correspondence $(\dagger)$.\footnote{It is interesting, on this account, to remember Bell's later indignation with the term `measurement' or `observable' altogether \cite{Bell1990}. He fervently rejects Heisenberg's proposal to reduce a physical theory to (or at least fundamentally build it from) observable outcomes, and is driven towards a more realist picture, grounded in spatiotemporal and causal primitives. He seems to find this in Bohm's account, which, interestingly, also abandons the observable algebra as a primitive object \cite{DaumerDürrGoldsteinZhangi1996}.}

This resonates with our view: in real experiments, experimental contexts and relations between them are always present in the background, even if not emphasised; even if seemingly redundant `for all practical purposes' \cite{Bell1990}, our result strengthens the call for a more careful analysis of the confounding simultaneity of the unparalleled predictive power of quantum theory with the apparent(?) inconsistency with the physical objects it describes. We end by sketching what we believe to be prime instances of this tension.

\subsection{Pondering the consequences}\label{sec: consequences}

With one of quantum theory's most basic pillars challenged, Thm.~\ref{thm: no observable-operator correspondence} puts into serious distress the interpretation of the noncommutative algebra in the formalisation of quantum theory. We suggested that adding additional contextual data may for most `practical purposes' save the usage of the formalism; still, Thm.~\ref{thm: no observable-operator correspondence} calls into question the physical meaning of noncommutative algebraic operations. There are various aspects to this, we only consider two that seem immediately relevant. The first addresses the noncommutative, associative product of operators in $\cN$, which admits the decomposition
\begin{align*}
    ab=\frac{1}{2}[a,b]+\frac{1}{2}\{a,b\}\; ,
\end{align*}
where $[\cdot,\cdot]$ denotes the commutator and $\{\cdot,\cdot\}$ the anti-commtutator. While the commutator of skew-Hermitian matrices obtains a dynamical interpretation (in terms of symmetry generators) analogous to the Poisson bracket in classical mechanics, the non-associative Jordan algebra of observables $(\cNsa,\{\cdot,\cdot\})$ still escapes a clear physical interpretation. Our result puts yet more emphasis on resolving this problem.

Another aspect relating to the algebraic structure of $\cNsa$ is the status of contextuality as a resource in quantum computation \cite{HowardEtAl2014,DelfosseEtAl2015,Raussendorf2019,FrembsRobertsBartlett2018}. As usually read, contextuality is a property of the mathematical formalism; in turn, we have argued that sufficient specification of experimental contexts effectively renders experimentally accessible spaces of observables noncontextual. One feels prompted to ask: does this put into jeopardy one of the key phenomena suggested to be responsible for quantum advantage?

Another question that relates to the algebraic structure of $\cNsa$ is to what extent the probabilistic nature of quantum theory conceals the problematic status of the observable axiom $(\dagger)$. Do Heisenberg's uncertainty relations imply sufficient indeterminacy between incompatible measurements that their impossibility to be labelled across many experimental contexts becomes irrelevant? In turn, can probabilities be understood as arising from the contextual nature of self-adjoint operators? And is this a property that is special to quantum mechanics? This question is pertinent both for the topos approach to quantum theory, as well as the general reconstruction programme \cite{Hardy2001,CliftonBubHalvorson2003,Grinbaum2007,ChiribellaDArianoPerinotti2011,MasanesMuller2011,Hoehn2017,Muller2021}.

Finally, we echo another idea of Bohr's: that relativity and complementarity (or maybe better, contextuality) both express different aspects of the same underlying theme: the necessary specification of referential data -- be it in the form of spatiotemporal reference frames, or the specification of experimental contexts -- as part of observational input to theory construction. Presently, this qualitative idea is lacking a unifying mathematical framework, however, we interpret the largely independent research programmes of topos quantum theory, quantum reference frames \cite{LoveridgeMiyaderaBusch2018,GiacominiCastroRuizBrukner2019,CastroRuizOreshkov2025,DeVuystHohnTsobanjan2026} and relational approaches to quantum theory \cite{Rovelli1996,FrauchigerRenner2018,Brukner2018,Bong2020} to all be concerned with this underlying question.\\

\textbf{Acknowledgments.} AD thanks Verena and Julian as well as Chris Isham for continuous inspiration.

\clearpage

\bibliographystyle{abbrv}
\bibliography{bibliography}

\end{document}